\newtheorem{theorem}{Theorem}[section]
\newtheorem{proposition}[theorem]{Proposition}
\newtheorem{corollary}[theorem]{Corollary}
\newcommand{\beq}{\begin{equation}}
\newcommand{\eeq}{\end{equation}}
\newcommand{\ds}{\displaystyle}
\theoremstyle{definition}
\newtheorem{definition}[theorem]{Definition}
\newtheorem{remark}[theorem]{Remark}
\begin{document}


\title{On refactorization problems and rational Lax matrices of quadrirational Yang--Baxter maps }
\author[1]{Pavlos Kassotakis}
\author[2]{Theodoros E. Kouloukas}
\author[3]{Maciej Nieszporski}

\affil[1]{Department of Mathematical Methods in Physics, Faculty of Physics,
University of Warsaw, Pasteura 5, 02-093, Warsaw, Poland}

\affil[2] {School of Computing and Digital Media, London Metropolitan University}
\affil[3]{Department of Mathematical Methods in Physics, Faculty of Physics,
University of Warsaw, Pasteura 5, 02-093, Warsaw, Poland}
 
\maketitle

\begin{abstract}
 We present rational Lax representations for one-component parametric quadrirational Yang--Baxter maps in both the abelian and non-abelian settings. We show that from the Lax matrices of a general class of non-abelian involutive Yang–Baxter maps ($\mathcal{K}$-list), 
by considering the symmetries of the $\mathcal{K}$-list maps, we obtain compatible refactorization problems with rational Lax matrices for other classes of non-abelian involutive Yang--Baxter maps ($\Lambda$, $\mathcal{H}$ and $\mathcal{F}$ lists). In the abelian setting, this procedure generates rational Lax representations for the abelian Yang–Baxter maps of the $F$ and $H$ lists.
Additionally, we provide examples of non-involutive (abelian and non-abelian)  multi-parametric Yang--Baxter maps, along with their Lax representations, which lie outside the preceding lists.
\end{abstract}


\section{Introduction}

Yang--Baxter maps,  i.e. set theoretical solutions of the Yang--Baxter equation 
\cite{Baxter,buch,Drin,skly88,ves2,Yang}, play an important role in the theory of discrete integrable systems. The Yang-Baxter equation expresses a compatibility condition  associated with classical integrability features, such as Lax representations, conserved quantities, Bäcklund–Darboux transformations, invariant Poisson structures, symmetries and exact solutions (see e.g. \cite{ ABS2,AdYa,sokor,kp3,MPW,paptonves,ves2}). 

A type of quadrirational Yang--Baxter maps on $\mathbb{CP}^1 \times  \mathbb{CP}^1$ has been partitioned under the Yang--Baxter equivalence into two lists the $F$-list \cite{ABS2} and the $H$-list \cite{PSTV}.
These two lists do not provide a complete classification, as various examples of non-involutive Yang–Baxter maps of the same type exist that are not included there (see e.g. \cite{kp4}). This fact does not diminish the importance of $F$ and $H$ lists, since they include some of the most celebrated examples of Yang--Baxter maps  which are related to well-known integrable lattice equations and have very interesting geometric interpretation. As was shown in \cite{KasKoul}, the non-abelian counterparts of the $F$ and $H$ lists can be grouped into four distinct, non-equivalent lists: the $\mathcal{K},\Lambda, \mathcal{H}$ and $\mathcal{F}$ lists. In the abelian setting, all three lists $\mathcal{K}, \Lambda, \mathcal{H}$ become equivalent to the $H$-list, while $\mathcal{F}$ reduces to the $F$-list.

In this paper, we aim to present rational Lax representations of the abelian $F$ and $H$-list quadrirational Yang--Baxter maps and their non-abelian analogues ($\mathcal{K},\Lambda, \mathcal{H}$ and $\mathcal{F}$ lists).
Matrix refactorization problems and Lax matrices of Yang--Baxter maps first appear in \cite{ves4,ves2,ves3}, where a construction of Lax matrices by M{\"o}bius transformations on $\mathbb{CP}^1$ is also presented. 
However, a direct application of this construction does not always successfully produce rational Lax matrices. This is particularly obvious in the case of the $F$-list Yang--Baxter maps. To overcome this difficulty, we use the symmetries of these maps.

We begin with the more general case of the non-abelian Yang–Baxter maps in the $\mathcal{K}$-list. We present a map with additional free parameters that generates all the $\mathcal{K}$-list for specific constant values of these parameters \cite{KasKoul}. Hence, a rational Lax matrix associated with this map produces all the rational Lax matrices of the $\mathcal{K}$-list maps. The maps from the rest non-abelian cases are obtained through the symmetries of the $\mathcal{K}$-list. Based on these symmetries, we introduce refactorization problems involving rational Lax matrices for $\Lambda$, $\mathcal{H}$ and $\mathcal{F}$ lists. The abelian limit of the $\mathcal{K}$-list gives rise to rational Lax matrices for the $H$-list Yang-Baxter maps (except of $H_V$ which is treated separately), while the abelian limit of the $\mathcal{F}$-list produces compatible refactorization problems with rational Lax matrices for the $F$-list. This approach does not apply to the $F_{IV}$ map, as it lacks a counterpart in the $H$-list. 

Finally, we present non-involutive six parameter extensions of the non-abelian $\mathcal{F}_{III},\mathcal{F}_V,\mathcal{K}_{III},\mathcal{H}_V$ Yang--Baxter maps and their associated  Lax representations.

\subsection{Yang--Baxter maps and Lax matrices}
A map $R: \mathcal{X} \times
\mathcal{X} \rightarrow \mathcal{X} \times \mathcal{X}$, $R:(x,y)\mapsto (u(x,y),v(x,y))$, where $\mathcal{X}$ is a set, is called a {\it{Yang--Baxter (YB) map}}  if it satisfies the {\em
Yang-Baxter equation}
\begin{equation*}\label{YBprop}
 R_{23}\circ R_{13}\circ
R_{12}=R_{12}\circ R_{13}\circ R_{23} \;.
\end{equation*} Here,  $R_{ij}$
for $i,j=1,2,3$, denotes the action of the map $R$ on the $i$ and
$j$ factor of $\mathcal{X} \times \mathcal{X} \times \mathcal{X}$,
i.e. $R_{12}(x,y,z)=(u(x,y),v(x,y),z)$,
$R_{13}(x,y,z)=(u(x,z),y,v(x,z))$ and
$R_{23}(x,y,z)=(x,u(y,z),v(y,z))$.

A YB map $R:(\mathbb{X} \times
\mathbb{I}) \times (\mathbb{X} \times \mathbb{I}) \mapsto
(\mathbb{X} \times \mathbb{I}) \times (\mathbb{X} \times
\mathbb{I})$, with
\begin{equation} \label{pYB}
R:((x,p),(y,q))\mapsto((u,p),(v,q))= ((u(x,p,
y,q),p),(v(x,p, y,q),q)),
\end{equation}
is called a \emph{parametric YB map} (\cite{ves2,ves3}). In this definition the set $\mathcal{X}$ is the Cartesian
product of the set of variables $\mathbb{X}$ and the set of parameters $\mathbb{I}$
with elements $p,q \in \mathbb{I}$ which remain invariant under $R$.
Typically, we consider that the set of variables 
$\mathbb{X}$ and the set of parameters $\mathbb{I}$ have the structure of an algebraic variety. 
We mainly deal here with the case where,
$\mathbb{X}$ is the set of complex numbers $\mathbb{C}$ or more precisely Riemann sphere $\mathbb{CP}^1$
and the set of parameters  $\mathbb{I}$ is  $\mathbb{C}^n$ or $(\mathbb{CP}^1)^n$, where $n$ denotes the number of parameters.
However, we discuss also 
 non-abelian cases, where we assume that $\mathbb{X}$ is a division ring $\mathcal{A}$, while the parameters (including the spectral parameter that appears in the Lax matrices) will be considered as elements of the center of the ring.
 We often
keep the parameters separately and denote a parametric YB map as
$R^{p,q}(x,y):\mathbb{X}\times \mathbb{X} \rightarrow
\mathbb{X} \times \mathbb{X}$  treating it as family of maps.

According to \cite{ves4}, a {\em Lax Matrix} of the parametric YB
map (\ref{pYB}) is a map $L:\mathbb{X} \times \mathbb{I} \times \mathbb{I}\rightarrow Mat(n\times n)$, such
that
\begin{equation} \label{laxmat}
L(u,p,\zeta)L(v,q,\zeta)=L(y,q,\zeta)L(x,p,\zeta)\;,
\end{equation}
 holds for any $\zeta\in \mathbb{I}$. 
 If the converse is also true, i.e. if the condition that equation (\ref{laxmat}) holds for any 
$\zeta\in \mathbb{I}$ 
implies $(u,v)=R_{p,q}(x,y)$, then $L(x,p,\zeta)$ is called {\em
strong Lax matrix}.
The parameter $\zeta$, which does not appear in the map, is called the {\it spectral parameter}.

On the other hand, solutions of \eqref{laxmat} give rise to YB maps under the so-called {\it 3-factorization property}
of the matrix $L$, which states that if $u=u^{p,q}(x,y), v=v^{p,q}(x,y)$ satisfy (\ref{laxmat}), for a matrix $L$  and the equation $$L(
\hat{x},p,\zeta )L( \hat{y} ,q,\zeta )L(\hat{z}, \gamma ,\zeta)= L(x
,p,\zeta)L(y, q,\zeta)L(z, \gamma,\zeta)$$ implies that $\hat{x}=x, \
\hat{y}=y$ and $\hat{z}=z$, for every $x,y,z \in \mathbb{X}$, then
$R_{p,q}(x,y)\mapsto(u,v)$ is a parametric YB map with Lax
matrix $L$ \cite{kp1}.

As it was shown in \cite{ves4,ves3}, if a YB map is generated by an effective action of the linear group $GL_N$, then the YB equation
yields a Lax representation. 
Conversely, using similar arguments from \cite{ves4}, we can show that  the solutions to the refactorization problem \eqref{laxmat}, which are  expressed by actions of $GL_N$,  satisfy the 3-factorization property and thus are YB maps. 

\begin{proposition}[\cite{ves4}] \label{propMob}
We consider a parametric map $R^{p,q}: \mathbb{X} \times \mathbb{X} \rightarrow  \mathbb{X} \times \mathbb{X}$ with
$$R^{p,q}:(x,y)\mapsto (u^{p,q}(x,y),v^{p,q}(x,y)):=(u,v),$$  an effective{\footnote{ $L\in GL_N$ acts identically on $\mathbb{X}$, iff $L=Id$}} action $GL_N \times \mathbb{X}\rightarrow \mathbb{X}$
  and a matrix valued function $L: \mathbb{X}\times \mathbb{I} \times \mathbb{I} \rightarrow GL_N$ such that
\begin{equation} \label{uvMob}
u=L(y,q,p)[x],  \ v=L(x,p,q)[y],
\end{equation}
where $L[x]$ denotes the action of $L\in GL_N$ on $x\in \mathbb{X}$. Then the map
$R^{p,q}$ is a parametric YB map if and only if
\begin{equation} \label{priff}
L(x,p,\zeta)L(y,q,\zeta)=L(v,q,\zeta)L(u,p,\zeta)\;.
\end{equation}

\end{proposition}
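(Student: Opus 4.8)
The plan is to establish both directions of the equivalence by exploiting the effectiveness of the $GL_N$-action, which lets us pass between matrix identities and identities of M\"obius-type actions on $\mathbb{X}$. First I would record the two basic consequences of effectiveness that will be used repeatedly: if $A,B\in GL_N$ satisfy $A[x]=B[x]$ for all $x\in\mathbb{X}$, then $A=B$ (since $B^{-1}A$ acts identically), and conversely equal matrices induce equal actions. Thus proving \eqref{priff} is equivalent to proving that $L(x,p,\zeta)L(y,q,\zeta)$ and $L(v,q,\zeta)L(u,p,\zeta)$ act identically on $\mathbb{X}$, and I will freely translate back and forth.

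For the direction \emph{$R^{p,q}$ is a YB map $\Rightarrow$ \eqref{priff}}, I would run the standard argument of \cite{ves4}: introduce three copies of $\mathbb{X}$ with parameters $p,q,\zeta$ (here $\zeta$ plays the role of the parameter on the third factor), and compute the action of both sides of the Yang--Baxter equation $R_{23}\circ R_{13}\circ R_{12}=R_{12}\circ R_{13}\circ R_{23}$ on, say, the first component using the formula \eqref{uvMob}. Each elementary map $R_{ij}$ contributes one factor $L(\cdot,\cdot,\cdot)$ to a product acting on the relevant variable; tracking the composition carefully, the first component of the left-hand side becomes a product of three $L$'s acting on $x$, and similarly for the right-hand side, while the equality of the \emph{third} components (which after $R_{13}$ and $R_{23}$ encodes exactly a relation of the form \eqref{priff} with $\zeta$ as the third parameter) yields \eqref{priff}. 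Here one uses that the Yang--Baxter relation must hold for an arbitrary value of the third parameter, which is precisely $\zeta$ ranging over $\mathbb{I}$; effectiveness then upgrades the equality of actions to the equality of matrices.

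For the converse \emph{\eqref{priff} $\Rightarrow$ $R^{p,q}$ is a YB map}, I would show directly that both sides of the Yang--Baxter equation, applied to an arbitrary triple $(x,y,z)$ with parameters $p,q,r$, produce the same output, by computing each of the three components as an action of a product of $L$-matrices and using \eqref{priff} (with the appropriate parameters substituted for $p,q,\zeta$) together with the obvious relabelling identities to rearrange the two triple products into one another. Concretely, each component of $R_{23}\circ R_{13}\circ R_{12}(x,y,z)$ is $\Lambda[x]$, $M[y]$ or $N[z]$ for explicit products $\Lambda,M,N$ of three $L$'s, and likewise for the other ordering; the three instances of \eqref{priff} are exactly what is needed to match the three pairs of products. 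Effectiveness is again used to conclude that matching the induced actions suffices — though in fact, since the components are literally defined via these actions in \eqref{uvMob}, equality of actions is equality of outputs and no appeal to effectiveness is even needed on this side; effectiveness is only essential for the forward direction, to extract a matrix identity from an identity of maps.

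The main obstacle I anticipate is bookkeeping: one must be scrupulous about \emph{which} variable each $L$-factor acts on and \emph{which} pair of parameters it carries after each elementary substitution, since the arguments of $L$ change (e.g.\ $x\mapsto u$, $y\mapsto v$) as one composes the $R_{ij}$'s. Getting the parameter slots of \eqref{uvMob} consistent with the convention $u=L(y,q,p)[x]$ (note the \emph{reversed} order $q,p$ in the first slot-pair) is the delicate point, and it is exactly this convention that makes the three applications of \eqref{priff} line up. Once the indexing is fixed, the verification is purely formal associativity of matrix multiplication plus three substitutions into \eqref{priff}.
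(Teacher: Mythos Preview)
The paper does not supply its own proof of this proposition; it is quoted from Suris--Veselov \cite{ves4} and used as input, so there is no in-paper argument to compare against. Your outline follows the standard argument of that reference and is essentially correct, with two small points that deserve tightening.

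First, a minor slip: each component of either side of the Yang--Baxter identity is touched by exactly \emph{two} of the three elementary maps $R_{ij}$, so it is a product of two (not three) $L$-factors applied to the corresponding initial variable. Your extraction of \eqref{priff} from the third component is the right move: with third parameter $r$ one finds
\[
\bigl(L(v,q,r)L(u,p,r)\bigr)[z]=\bigl(L(x,p,r)L(y,q,r)\bigr)[z]
\]
for all $z$, and effectiveness (over $z$, with $r=\zeta$ arbitrary) upgrades this to the matrix identity \eqref{priff}.

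Second, for the converse, matching the first components requires
\[
L(z,r,p)L(y,q,p)=L(\tilde y,q,p)L(\tilde z,r,p),\qquad (\tilde y,\tilde z)=R^{q,r}(y,z),
\]
and this is \eqref{priff} only after the relabelling $(x,p;y,q;u,v)\to(z,r;y,q;\tilde z,\tilde y)$, which needs $(\tilde z,\tilde y)=R^{r,q}(z,y)$. That is the reversibility $P\circ R^{p,q}\circ P=R^{q,p}$ (with $P$ the flip), and it holds automatically from the symmetric form of \eqref{uvMob} since $u^{p,q}(x,y)=L(y,q,p)[x]=v^{q,p}(y,x)$. Your phrase ``obvious relabelling identities'' presumably covers this, but it is the one nontrivial step in the bookkeeping and should be stated explicitly; without it the first- and second-component checks do not close directly from \eqref{priff}. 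The analogous observation handles the second component.
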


\ 

In many cases, we can derive rational Lax matrices of YB maps on $\mathbb{CP}^1 \times  \mathbb{CP}^1$ by considering the group $GL_2$  acting on $\mathbb{CP}^1$ by M{\"o}bius transformations \cite{ves4,ves3}, i.e. for $L=\begin{pmatrix}
    a&b\\
    c&d
\end{pmatrix}
 \in GL_2$,  $L[x]:={\displaystyle\frac{a x+b}{cx+d}}\;$. 
For example, Adler's map ($H_V$ map in the list of \cite{PSTV}) is given by
\begin{equation} \label{adler}
u=y-\frac{p-q}{x+y}\;, \ v=x+\frac{p-q}{x+y}\;,
\end{equation}
which is derived by the M{\"o}bius transformations corresponding to
\begin{equation} \label{laxAdler}
L(x,p,\zeta):=\left(
\begin{array}{cc}
x& \ p - \zeta+x^2 \\
 1& x
\end{array}
\right).
\end{equation}
 Hence, $L(x,p,\zeta)$ is a Lax matrix of Adler's map \cite{ves4,ves2}.
However, this procedure of obtaining rational Lax pairs of rational YB maps is not always successful. For example,     the $F_{IV}$ YB map \cite{ABS2}
\begin{equation} \label{F4-YB}
u=y\left(1-\frac{p-q}{x-y}\right)\;, \ v=x\left(1-\frac{p-q}{x-y}\right)\;,
\end{equation}
is derived by the M{\"o}bius transformation that corresponds to the  Lax matrix 
\begin{equation} \label{laxf4}
L(x,p,\zeta):=\left(
\begin{array}{cc}
{\ds\sqrt{x}}&{\ds \sqrt{x}(p-x+\zeta) }\\ [3mm]
 {\ds \frac{1}{\sqrt{x}}}& -{\ds \sqrt{x}}
\end{array}\right),
\end{equation}
that is clearly not rational.

\subsection{ The H and the F list of quadrirational YB maps and their symmetries}

In order to distinguish between equivalence classes of YB maps we need an equivalence relation that is introduced in the following Proposition.

 \begin{proposition}[\cite{PSTV}] \label{equivalence}
 If $R^{p,q}:\mathbb{X}\times \mathbb{X} \rightarrow
\mathbb{X} \times \mathbb{X}$ is a parametric YB map and $\phi(p):\mathbb{X} \rightarrow
\mathbb{X}$ a family of bijections parametrised by $p \in \mathbb{I}$, then the map
\begin{equation} \label{equiv}
\tilde{R}^{p,q}= (\phi^{-1}({p})\times \phi^{-1}({q})) R^{p,q}  (\phi({p})\times \phi({q}))
\end{equation}
is a YB map. 
Equation (\ref{equiv})  establishes an equivalence relation in the set of YB maps and
if the YB maps $R$ and $\tilde{R}$ are related by  (\ref{equiv}) then we refer to them as {\it  equivalent}.
 \end{proposition}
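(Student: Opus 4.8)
The statement to be proved is Proposition~\ref{equivalence}: that conjugating a parametric YB map $R^{p,q}$ by a parameter-dependent family of bijections $\phi(p)$ as in \eqref{equiv} again yields a parametric YB map, and that \eqref{equiv} generates an equivalence relation. The plan is to verify the Yang--Baxter equation for $\tilde R$ directly, by pushing the conjugating factors through the composition $\tilde R_{23}\circ\tilde R_{13}\circ\tilde R_{12}$ and collecting them. First I would fix notation: write $\Phi_i(p)$ for the bijection of $\mathbb{X}\times\mathbb{X}\times\mathbb{X}$ acting as $\phi(p)$ on the $i$-th factor and as the identity on the others, so that $\tilde R_{ij}=\Phi_i(?)^{-1}\Phi_j(?)^{-1}\,R_{ij}\,\Phi_i(?)\Phi_j(?)$ with the appropriate parameters attached to the appropriate slots ($p$ for slot~$1$, $q$ for slot~$2$, and a third parameter, say $r$, for slot~$3$). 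The essential observation is that $\Phi_i$ and $\Phi_j$ commute for $i\neq j$ (they act on different factors), and that each $R_{ij}$ leaves the parameters in the $i$- and $j$-slots fixed, which is precisely what makes the conjugating factors telescope.

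Concretely, the key step is the computation
\[
\tilde R_{23}\circ\tilde R_{13}\circ\tilde R_{12}
=\bigl(\Phi_1(p)^{-1}\Phi_2(q)^{-1}\Phi_3(r)^{-1}\bigr)\,
\bigl(R_{23}\circ R_{13}\circ R_{12}\bigr)\,
\bigl(\Phi_1(p)\Phi_2(q)\Phi_3(r)\bigr),
\]
obtained by inserting $\Phi_i\Phi_i^{-1}=\mathrm{Id}$ between consecutive factors and checking that the inner $\Phi$'s cancel. This cancellation uses exactly two facts: (i) the $\Phi_i$'s for distinct $i$ commute, and (ii) $R_{12}$ does not disturb the parameters $p,q$ attached to slots $1,2$ (and likewise for the other indices), so that the $\Phi(p),\Phi(q)$ emerging from the right of $R_{12}$ are the same ones needed on the left of $R_{13}$ and $R_{23}$. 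Carrying out the same manipulation on $\tilde R_{12}\circ\tilde R_{13}\circ\tilde R_{23}$ gives the mirror-image identity with the very same outer conjugating factors. Since $R$ satisfies $R_{23}\circ R_{13}\circ R_{12}=R_{12}\circ R_{13}\circ R_{23}$, conjugating both sides by the common factor $\Phi_1(p)\Phi_2(q)\Phi_3(r)$ yields the Yang--Baxter equation for $\tilde R$. One should also note that $\tilde R$ is still \emph{parametric} in the sense of \eqref{pYB}: the $\phi(p)$ act only on the $\mathbb{X}$-component and leave $p\in\mathbb{I}$ untouched, so the parameters remain invariant under $\tilde R$.

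For the equivalence-relation claim I would check the three axioms using the group structure of composition of bijections in each parameter slot. Reflexivity is immediate with $\phi(p)=\mathrm{Id}$. Symmetry follows because if $\tilde R$ is obtained from $R$ via $\phi$, then $R$ is obtained from $\tilde R$ via $\phi^{-1}$ (one inverts \eqref{equiv}, using that $(\phi(p)\times\phi(q))$ is invertible with inverse $(\phi^{-1}(p)\times\phi^{-1}(q))$). Transitivity follows because composing two conjugations by $\phi$ and $\psi$ is a conjugation by the family $\psi(p)\circ\phi(p)$, again a family of bijections parametrised by $p$. None of these steps is deep.

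I do not expect a genuine obstacle here; the only thing requiring care is bookkeeping of which parameter sits in which tensor slot during the telescoping computation, and making sure the cancellation $\Phi_i\Phi_i^{-1}=\mathrm{Id}$ is applied with matching parameters — a point that relies on the defining property that each $R_{ij}$ preserves the parameters carried by factors $i$ and $j$. If anything, the subtlety is purely notational: one must resist conflating the bijection $\phi(p)$ of $\mathbb{X}$ with its lift $\Phi_i(p)$ to the triple product, and must track that, e.g., $R_{13}$ touches slots $1$ and $3$ but not slot $2$, so $\Phi_2(q)$ passes through it freely. Once that is set up, both sides of the YB equation for $\tilde R$ reduce to the YB equation for $R$ sandwiched between the identical pair $\Phi_1(p)\Phi_2(q)\Phi_3(r)$ and its inverse, and the proof is complete.
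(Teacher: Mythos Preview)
Your argument is correct and is the standard conjugation/telescoping verification. Note, however, that the paper does not actually supply a proof of this proposition: it is stated with the citation \cite{PSTV} and used as a known fact, so there is no paper proof to compare against. Your write-up would serve perfectly well as the omitted proof; the only cosmetic point is that the third factor in $\mathbb{X}\times\mathbb{X}\times\mathbb{X}$ carries its own parameter (your $r$), and it may be worth stating explicitly that the parametric YB equation is required to hold for all triples $(p,q,r)$, which is exactly what your sandwiched identity delivers.
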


The following definition introduces the notion of a symmetry of a YB map.

\begin{definition}[\cite{PSTV}]
     A symmetry of a YB map $R^{p,q}:\mathbb{X}\times \mathbb{X} \rightarrow
\mathbb{X}\times \mathbb{X}$ is a parametric  family of bijections   $\sigma({p}):\mathbb{X} \rightarrow
\mathbb{X}$, such that
\begin{equation} \label{sym}
(\sigma(p)\times \sigma(q))R^{p,q}=R^{p,q}( \sigma(p)\times \sigma(q)).
\end{equation}
\end{definition}
Note that if $\sigma(p)$ is a symmetry of a YB map $R_{p,q}$, then the maps
\begin{align*} 
R^{\sigma}=&(\sigma(p)^{-1}\times Id)R^{p,q}(Id \times \sigma(q)),& \hat R^{\sigma}=&(Id\times \sigma(q)^{-1})R^{p,q}(\sigma(p)\times Id),
\end{align*}
are parametric  YB maps as well. In general, neither $R^{\sigma}$ nor $R^{\sigma}$, are  YB equivalent with the map $R$. Note that when the YB map $R_{p,q}$ is involutive then $R^{\sigma}=\hat R^{\sigma}.$

\begin{definition}[\cite{ABS2}]
A map $R: \mathcal{X} \times\mathcal{X} \rightarrow \mathcal{X} \times \mathcal{X}$,
with $R:(x,y)\mapsto (u(x,y),v(x,y))$, is called {\it{quadrirational}} if both maps
$v(x,\cdot):\mathcal{X} \rightarrow \mathcal{X} $ and $u(\cdot,y):\mathcal{X} \rightarrow \mathcal{X}$, for fixed $x,y\in\mathcal{X}$ respectively, are
birational isomorphisms of $\mathcal{X}$.
\end{definition}

In \cite{ABS2},  Adler, Bobenko and Suris classified all parametric quadrirational maps of the subclass $[2:2]$  on
$\mathbb{CP}^1 \times  \mathbb{CP}^1$ into five cases of the so-called $F$-list under $(M\text{\"o}b)^4$ transformations, i.e.
M{\"o}bius transformations acting independently on each field $x,y,u,v$. All these cases turned out to be YB maps.
 However, in general the YB property is not preserved under $(M\text{\"o}b)^4$ transformations (quadrirational maps are not necessarily YB maps).  That allowed \cite{PSTV} to complement the $F-list$ with five additional cases that form the so-called $H$-list.
The maps in the $H$-list are derived from those in the $F$-list through the symmetries of the latter. Explicitly the $F$ and the $H-$list are presented in appendix \ref{app1}.

Applying Proposition \ref{propMob} to any member of the $F$ or the $H-$list, we obtain associated Lax pairs. Note though that  the obtained Lax pairs of the $H-$list are rational (an appropriate normalization might be of use) while the ones of the $F-$list (apart $F_{III}$ and $F_V$)
 are not.  

In Table \ref{table1}, we present the rational Lax matrices of the $H-$list, which are obtained by Proposition \ref{propMob}.  Indeed, it can be easily shown that the mappings $H_I-H_V$ (see Appendix \ref{app1}) can be expressed, via a $GL_2$ action, as (\ref{uvMob}) for the corresponding Lax matrices given in Table \ref{table1}. In addition, (\ref{priff}) for each $L$ given in  Table \ref{table1}, is equivalent to $H_I-H_V$. So  the matrices $L$ given in  Table \ref{table1}, serve as Lax matrices for the $H-$list of Yang-Baxter maps.

\begin{table}[htbp]
 \begin{tabular}{|c|c|c|c|c|c|}
  \hline
   & $H_I$&$H_{II}$&$H_{III}^A$&$H_{III}^B$&$H_V$ \\  \hline
  $L(x,p;\zeta)$ & ${\ds\begin{pmatrix}
                          \frac{x-p}{x-1}&\zeta (p-1)\frac{x}{x-1}\\
                          p-x& x
                          \end{pmatrix}}$ & ${\ds\begin{pmatrix}
                          1&\zeta p (x-1) \\
                          \frac{1}{x}& \frac{x-1}{x}\end{pmatrix}}$& ${\ds\begin{pmatrix}
                          1&\zeta p x \\
                          \frac{1}{x}& 1 \end{pmatrix}}$ & ${\ds\begin{pmatrix}
                         p x &\zeta \\
                         1 & \frac{1}{x}\end{pmatrix}}$ &  ${\ds\begin{pmatrix}
                          x&p-\zeta+x^2 \\
                          1 & x \end{pmatrix}}$    \\
  \hline
\end{tabular}
\caption{Lax matrices of the $H-$list of quadrirational YB maps} \label{table1}
\end{table}


\section{Lax matrices of YB maps which are related through symmetry}

The following  is the main Theorem of this article.

\begin{theorem} \label{propsym}
Let $R^{p,q}:(x,y)\mapsto (u,v),$ be a YB map with Lax matrix $L(x,p,\zeta)$ and $\sigma_p:=\sigma(p)$ a symmetry of this map.
Then the YB map, 
\begin{align}\label{Rs}
\hat R^{\sigma}:=(\sigma_{p}^{-1}\times Id)R^{p,q}(Id \times \sigma_{q}):(x,y) \mapsto (\hat u,\hat v),
\end{align}
satisfies the refactorization problem
\begin{equation} \label{refs0}
L(\sigma_{p}(\hat u),p,\zeta)L(\hat v,q,\zeta)=L(\sigma_{q}({y}),q,\zeta)L(x,p,\zeta)\;,
\end{equation}
while the YB map 
\begin{align}
\tilde R^{\sigma}:=(Id\times \sigma_{q}^{-1})R^{p,q}(\sigma_{p}\times Id):(x,y) \mapsto (\tilde u,\tilde v),
\end{align}
satisfies
\begin{equation}
   \label{refs1}
L(\tilde u,p,\zeta)L(\sigma_{q}(\tilde v),q,\zeta)=L(y,q,\zeta)L(\sigma_{p}(x),p,\zeta)\;. 
\end{equation}
Furthermore, if $L$ is a strong Lax matrix then equation \eqref{refs0} is equivalent to  $(\hat u,\hat v)=\hat R^{\sigma}(x,y)$, while \eqref{refs1} is equivalent to  $(\tilde u,\tilde v)=\tilde R^{\sigma}(x,y)$. 
\end{theorem}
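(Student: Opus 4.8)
The plan is to verify both refactorization identities by direct substitution: unwind the definitions of $\hat R^{\sigma}$ and $\tilde R^{\sigma}$, record how the components of $(\hat u,\hat v)$ and $(\tilde u,\tilde v)$ sit inside a single instance of the original Lax relation \eqref{laxmat}, and read off the claim.

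First I would spell out $\hat R^{\sigma}$. Writing $R^{p,q}:(a,b)\mapsto(u(a,b),v(a,b))$, the composition $(\sigma_{p}^{-1}\times Id)\circ R^{p,q}\circ(Id\times\sigma_{q})$ sends $(x,y)$ to $(x,\sigma_{q}(y))$, then to $(u(x,\sigma_{q}(y)),v(x,\sigma_{q}(y)))$, and finally to $(\hat u,\hat v)=(\sigma_{p}^{-1}(u(x,\sigma_{q}(y))),\,v(x,\sigma_{q}(y)))$; equivalently $(\sigma_{p}(\hat u),\hat v)=R^{p,q}(x,\sigma_{q}(y))$. Applying \eqref{laxmat} to the pair $(x,\sigma_{q}(y))$ gives $L(u(x,\sigma_{q}(y)),p,\zeta)L(v(x,\sigma_{q}(y)),q,\zeta)=L(\sigma_{q}(y),q,\zeta)L(x,p,\zeta)$, and substituting $u(x,\sigma_{q}(y))=\sigma_{p}(\hat u)$ and $v(x,\sigma_{q}(y))=\hat v$ yields precisely \eqref{refs0}. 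The identity \eqref{refs1} is the mirror image: unwinding $\tilde R^{\sigma}=(Id\times\sigma_{q}^{-1})\circ R^{p,q}\circ(\sigma_{p}\times Id)$ gives $(\tilde u,\sigma_{q}(\tilde v))=R^{p,q}(\sigma_{p}(x),y)$, and feeding the pair $(\sigma_{p}(x),y)$ into \eqref{laxmat} produces \eqref{refs1}. Note that this part uses only that $\sigma_{p},\sigma_{q}$ are bijections and the Lax relation for $R^{p,q}$ — it does not even invoke the symmetry property; that $\hat R^{\sigma},\tilde R^{\sigma}$ are genuine YB maps is the content of the remark preceding the statement.

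For the \emph{furthermore} part I would use the definition of a strong Lax matrix. Suppose a pair $(\bar u,\bar v)$ satisfies \eqref{refs0} for every $\zeta$. Matching \eqref{refs0} against the template \eqref{laxmat} under the correspondence $u\leftrightarrow\sigma_{p}(\bar u)$, $v\leftrightarrow\bar v$, $y\leftrightarrow\sigma_{q}(y)$, $x\leftrightarrow x$, strongness forces $(\sigma_{p}(\bar u),\bar v)=R^{p,q}(x,\sigma_{q}(y))$; comparing with $(\sigma_{p}(\hat u),\hat v)=R^{p,q}(x,\sigma_{q}(y))$ from the first part and cancelling the injective $\sigma_{p}$ gives $(\bar u,\bar v)=(\hat u,\hat v)$. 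Combined with the first part, this is the asserted equivalence for \eqref{refs0}, and the identical argument under the correspondence $u\leftrightarrow\tilde u$, $v\leftrightarrow\sigma_{q}(\tilde v)$, $y\leftrightarrow y$, $x\leftrightarrow\sigma_{p}(x)$ settles the case of \eqref{refs1}.

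I do not expect a genuine obstacle; the argument is essentially bookkeeping. The only place where care is needed is tracking which argument slot of $R^{p,q}$ each conjugating bijection acts on, and correctly identifying the four slots of \eqref{laxmat} when invoking strongness — an accidental swap of the $x$- and $y$-slots there would invalidate the deduction. It also seems worth remarking explicitly that the refactorization equations \eqref{refs0}--\eqref{refs1} hold for an arbitrary parametric family of bijections $\sigma$, the symmetry hypothesis being needed only so that $\hat R^{\sigma}$ and $\tilde R^{\sigma}$ qualify as Yang--Baxter maps.
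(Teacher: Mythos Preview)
Your proof is correct and follows essentially the same route as the paper: unwind the definition of $\hat R^{\sigma}$ to obtain $(\sigma_{p}(\hat u),\hat v)=R^{p,q}(x,\sigma_{q}(y))$, apply the Lax identity \eqref{laxmat} with input $(x,\sigma_{q}(y))$, and then invoke strongness for the converse, with the $\tilde R^{\sigma}$ case handled symmetrically. Your additional observation that the refactorization identities themselves require only bijectivity of $\sigma_{p},\sigma_{q}$ (the symmetry hypothesis entering only to ensure $\hat R^{\sigma},\tilde R^{\sigma}$ are YB maps) is accurate and worth keeping.
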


\begin{proof}
Let $R^{p,q}(x,y):=(u(x,p,y,q),v(x,p,y,q))$, then from \eqref{Rs} we derive that 
\begin{equation} \label{prf1}
(\sigma_p(\hat u),\hat{v})=(u(x,p,\sigma_q(y),q),v(x,p,\sigma_q(y),q))\;.
\end{equation}
Now since $L(x,p,\zeta)$ is a Lax matrix of $R^{p,q}$, we have that 
\begin{equation} \label{prf2}
L(u(x,p,\sigma_q(y),q),p,\zeta)L(v(x,p,\sigma_q(y),q),q,\zeta)=
L(\sigma_q(y),q,\zeta)L(x,p,\zeta)\;,
\end{equation}
which according to \eqref{prf1} is equivalent to \eqref{refs0}. Furthermore, if $L$ is a strong Lax matrix of $R^{p,q}$, then equation \eqref{prf2} is equivalent to \eqref{prf1} which is  equivalent to 
\begin{equation*}
(\hat u,\hat v)=(\sigma_p^{-1} (u(x,p,\sigma_q(y),q)),v(x,p,\sigma_q(y),q))=\hat R^{\sigma}(x,y)\;.
\end{equation*}

The proof that the   YB map $\tilde R^{\sigma}$ satisfies (\ref{refs1}) is similar so we omit it.
\end{proof}

\begin{remark}
Equation \eqref{refs0} does not provide a Lax representation of the YB map  \eqref{Rs} in the classical sense of the definition associated with equation \eqref{laxmat},  as it involves two matrices,  $L:=L(x,\alpha,\zeta)$ and $M:=L(\sigma_{\alpha}(x),\alpha,\zeta)$, instead of one. However, we still somewhat loosely refer to $L$ and $M$ as Lax matrices, since each one is a Lax matrix of a YB map.
\end{remark}

\subsection{Rational Lax matrices for the $\mathcal{K},\Lambda, \mathcal{H}$ and $\mathcal{F}$ lists of non-abelian quadrirational YB maps}

Recently, there has been increased interest in non-abelian analogues of YB maps \cite{Adam1,Doli,KasKoul,Kass:2023,Rizos:2}. The  non-abelian extensions of the $F$ and the $H$ lists were obtained in  \cite{KasKoul}. In detail,   there were considered quadrirational YB maps defined on $\mathcal{A}\times \mathcal{A}$   where $\mathcal{A}$  a division ring, that served as the noncommutative analogues of the the $F$ and the $H$ lists. Moreover they were refered to as the $\mathcal{F}$ and the $\mathcal{H}$ list of non-abelian quadrirational YB maps. Furthermore, two additional lists of non-abelian quadrirational YB maps were obtained the so-called $\mathcal{K}$ and the $\Lambda$ lists (see Figure \ref{nonco}).

The following YB maps were obtained in \cite{KasKoul}
\begin{align}\label{K1}
\begin{aligned}
& \quad \mathcal{K}_{a,b,c}^{p,q}:(x,y)\mapsto (u,v),\qquad  \mbox{where}\\
 u=&y\left(a x y+bq-cq(x+y)\right)^{-1} \left(a x y+bp-c(qx+py)\right),\\
 v=&\left(a x y+bq-c(py+qx)\right)\left(axy+bp-cp(x+y)\right)^{-1}u,
 \end{aligned}
 \end{align}
 
\begin{align}\label{L1}
\begin{aligned}
& \quad \Lambda_{a,b,c}^{p,q}:=(\psi^{-1}\times id)\,\mathcal{K}_{a,b,c}\; (id \times \psi):(x,y)\mapsto (u,v),\qquad  \mbox{where}\\
 u=&py\left(ab(qx+py)-cq(bp+axy)\right)^{-1}\left(ab(x+y)-c(bq+axy)\right),\\
 v=&q\left(ab(x+y)-c(bp+axy)\right)\left(ab(qx+py)-cp(bq+axy)\right)^{-1}x,
 \end{aligned}
 \end{align}

\begin{align}\label{H1}
\begin{aligned}
& \quad \mathcal{H}_{a,b,c}^{p,q}:=(\phi^{-1}\times id)\,\mathcal{K}_{a,b,c}\; (id \times \phi):(x,y)\mapsto (u,v),\qquad  \mbox{where}\\
 u=&\left((axy-bq)(y-\frac{c}{a}q)^{-1}-(axy-bp)(y-\frac{b}{c})^{-1}\right)^{-1}\\
 &\qquad\left(p(axy-bq)(\frac{a}{c}y-q)^{-1}-(axy-bp)(\frac{c}{b}y-1)^{-1}\right)\\
 v=&\left(a (ab-c^2q)xy+abc(q-p)y+bq(c^2p-ab)\right)\\
  &\qquad\left(a (ab-c^2p)xy+abc(p-q)x+bp(c^2q-ab)\right)^{-1}x,
 \end{aligned}
 \end{align}

\begin{align}\label{F1}
\begin{aligned}
& \quad \mathcal{F}^{p,q}_{a,b,c}:=(\psi^{-1}\circ\phi^{-1}\times id)\,\mathcal{K}_{a,b,c}\; (id \times \phi\circ \psi):(x,y)\mapsto (u,v),\qquad  \mbox{where}\\
 u=&p\left(c p(x-y)(b-cy)^{-1}-a(qx-py)(cq-ay)^{-1}\right)^{-1}\\
  &\quad \qquad  \left(b(x-y)(b-cy)^{-1}-c(qx-py)(cq-ay)^{-1}\right),\\
 v=&q\left((ab-c^2q)x+bc(q-p)+(c^2p-ab)y\right)\\
 &\quad \qquad\left(q(ab-c^2p)x+ac(p-q)xy+p(c^2q-ab)y\right)^{-1}x,
 \end{aligned}
 \end{align}
 where $a,b$ and $c$ denote free parameters that take values in the center of the division ring  $\mathcal{A}$ and so does the Yang-Baxter parameters $p,q$.  
  Due to  equivalence relation (\ref{equiv}),  the  parameters $a,b,c$  can be scaled to $1$ when they  are neither $0$ nor $\infty$. So without loss of generality we can assume that   $a,b,c\in\{0,1,\infty\}.$
  The maps $\mathcal{K}_{a,b,c}, \Lambda_{a,b,c}, \mathcal{H}_{a,b,c},$ and  $\mathcal{F}_{a,b,c},$ for $a=b=c=1,$   are referred to as {\em the   generic maps (or members)} of the $\mathcal{K},$  $\Lambda$ $\mathcal{H},$ and $\mathcal{F}$ lists respectively.
 

\begin{figure}[htbp]
\begin{center}
 \begin{minipage}[htb]{0.35\textwidth}
\begin{tikzcd}[every arrow/.append style={}]
   &  \mathcal{K}\arrow{ld}[left,above,yshift=0.7ex]{\Phi} \arrow{dr}{\Psi}  & \\
\mathcal{H} \arrow{dr}[left,below,yshift=-0.3ex,xshift=-0.5ex]{\Psi} & & \Lambda\arrow{dl}{\Phi}\\
    & \mathcal{F}  &
\end{tikzcd}
\captionsetup{font=footnotesize}
\captionof*{figure}{(a) Non-abelian setting}
\end{minipage} 
\begin{minipage}[htb]{0.35\textwidth}
\begin{tikzcd}[every arrow/.append style={},row sep=12ex]
\mathcal{H}\simeq \mathcal{K} \simeq \Lambda \arrow{d}{\Phi\circ \Psi}\\
\mathcal{F}
\end{tikzcd}
\captionsetup{font=footnotesize}
\captionof*{figure}{(b) Abelian setting}
\end{minipage}
\caption{The $\mathcal{F},$  $\mathcal{H},$ $\mathcal{K}$ and  $\Lambda$ lists of quadrirational Yang-Baxter maps in the  non-abelian and in the abelian setting. The generic members of these lists are related by the morphisms  $\Phi: R\rightarrow (\phi^{-1}\times id)R(id\times \phi)$ and $\Psi: R\rightarrow (\psi^{-1}\times id )R(id\times \psi),$ where $\phi,\psi,$  symmetries.} \label{nonco}
\end{center}
\end{figure}
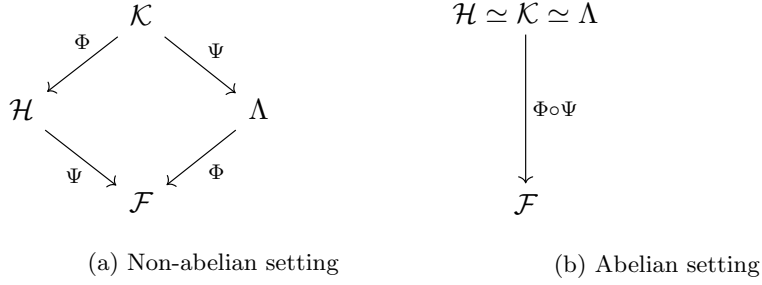

 These generic maps admit the dihedral  discrete symmetry group $D_2=<\alpha,\beta | \alpha^2=\beta^2=(\alpha\beta)^2=Id>,$ that is, the following families of bijections
\begin{align} \label{symab}
    \phi(p):x\mapsto& \frac{b}{a}(a x-c p)(cx-b)^{-1},&  \psi(p):x\mapsto& \frac{b}{a}p x^{-1},
\end{align}
serve as a birational realization of $D_2.$ 

 A strong Lax matrix was associated to the generic member of  the  $\mathcal{K}$ list in \cite{KasKoul}.  Theorem \ref{propsym}  allows to obtain Lax matrices for  the generic members of the   $\Lambda,$ $\mathcal{H}$ and $\mathcal{F}$ lists, out of the strong Lax matrix of the generic map of the $\mathcal{K}$ list, as the following Proposition suggests. 
\begin{proposition}
Assuming that the spectral parameter $\zeta$ and the Yang-Baxter parameters $p,q$ are elements of the center of the division ring $\mathcal{A},$ the following statements hold.
\begin{enumerate}
    \item The generic map of the $\mathcal{K}$ list, that is mapping  (\ref{K1}), is equivalent to the refactorization problem
    \begin{align*} 
L(u,p,\zeta)L(v,q,\zeta)=L(y,q,\zeta)L(x,p,\zeta),
\end{align*}
where
\begin{align}\label{laxk1}
 L(x,p,\zeta):=\begin{pmatrix}
ax-c p&\zeta(b-cx)\\
a-cp x^{-1}&p(bx^{-1}-c)   
\end{pmatrix};
\end{align}

\item The generic map of the $\Lambda$ list, that is mapping  (\ref{L1}), is equivalent to the refactorization problem
    \begin{align*} 
M(u,p,\zeta)L(v,q,\zeta)=M(y,q,\zeta)L(x,p,\zeta),
\end{align*}
where $L$ is given in (\ref{laxk1}) and
\begin{align*}
 M(x,p,\zeta):=\begin{pmatrix}
p(bx^{-1}-c)&\zeta b(1-\frac{c}{a}px^{-1})\\
a(1-\frac{c}{b}x)&ax-pc   
\end{pmatrix};
\end{align*}

\item The generic map of the $\mathcal{H}$ list, that is mapping  (\ref{H1}), is equivalent to the refactorization problem
    \begin{align*} 
M(u,p,\zeta)L(v,q,\zeta)=M(y,q,\zeta)L(x,p,\zeta),
\end{align*}
where $L$ is given in (\ref{laxk1}) and
\begin{align*}
 {\displaystyle M(x,p,\zeta):=\begin{pmatrix}
b(ax-cp)(cx-b)^{-1}-cp&\zeta b\left(1-\frac{c}{a}(ax-cp)(cx-b)^{-1}\right)\\
a\left(1-p\frac{c}{b}(cx-b)(ax-cp)^{-1}\right)&p\left(a(cx-b)(ax-cp)^{-1}-c\right)   
\end{pmatrix}};
\end{align*}

\item The generic map of the $\mathcal{F}$ list, that is mapping  (\ref{F1}), is equivalent to the refactorization problem
    \begin{align*} 
M(u,p,\zeta)L(v,q,\zeta)=M(y,q,\zeta)L(x,p,\zeta),
\end{align*}
where $L$ is given in (\ref{laxk1}) and
\begin{align*}
 {\displaystyle M(x,p,\zeta):=\begin{pmatrix}
p \left(a(cx-b)(ax-cp)^{-1}-c\right)&\zeta\left(b-cp(cx-b)(ax-cp)^{-1}\right)\\
a-c(ax-cp)(cx-b)^{-1}&b(ax-cp)(cx-b)^{-1}-cp   
\end{pmatrix}}.
\end{align*}

\end{enumerate}
\end{proposition}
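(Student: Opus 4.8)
The plan is to invoke Theorem~\ref{propsym} in combination with the already-established item~(1) of this Proposition, rather than to verify all four refactorization identities by direct computation. First I would note that, by item~(1), the matrix $L(x,p,\zeta)$ of \eqref{laxk1} is a strong Lax matrix of the generic $\mathcal{K}$-list map $\mathcal{K}_{1,1,1}^{p,q}$, and that by \eqref{symab} (with $a=b=c=1$) the bijections $\phi$ and $\psi$ are symmetries of this map. Since by definition \eqref{L1}, \eqref{H1}, \eqref{F1} the generic members of $\Lambda$, $\mathcal{H}$, $\mathcal{F}$ are precisely the maps $\hat R^{\sigma}$ obtained from $\mathcal{K}_{1,1,1}$ by the symmetries $\psi$, $\phi$, and $\phi\circ\psi$ respectively (here one uses that a composition of symmetries is a symmetry, and that these maps are involutive so $\hat R^\sigma$ and $\tilde R^\sigma$ coincide), Theorem~\ref{propsym} applies verbatim and yields the refactorization problem
\begin{align*}
L(\sigma_p(\hat u),p,\zeta)\,L(\hat v,q,\zeta)=L(\sigma_q(y),q,\zeta)\,L(x,p,\zeta)
\end{align*}
for each of the three cases, with $\sigma$ the relevant symmetry.

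The second step is then purely bookkeeping: one identifies $M(x,p,\zeta):=L(\sigma_p(x),p,\zeta)$ in each case and checks that the resulting matrix is exactly the one displayed in items~(2)--(4). Concretely, for $\Lambda$ one substitutes $x\mapsto\psi(x)=px^{-1}$ (at $a=b=c=1$) into \eqref{laxk1}; for $\mathcal{H}$ one substitutes $x\mapsto\phi(x)=(x-p)(x-1)^{-1}$; and for $\mathcal{F}$ one substitutes $x\mapsto(\phi\circ\psi)(x)$. Reintroducing the free parameters $a,b,c$ by the scaling afforded by the equivalence relation \eqref{equiv} (the $D_2$ realization \eqref{symab} is written with general $a,b,c$) gives the stated general-parameter forms; since $L$ is a strong Lax matrix, the last sentence of Theorem~\ref{propsym} guarantees the refactorization problem is equivalent to the map, which is the claimed ``equivalence.''

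A point that needs care is the noncommutativity of $\mathcal{A}$: the substitutions producing $M$ involve inverses such as $(cx-b)^{-1}$ and $(ax-cp)^{-1}$, and in simplifying $L(\phi(x),p,\zeta)$ one must keep track of the order of factors and of the fact that $p,q,\zeta,a,b,c$ are central while $x$ is not. I would therefore carry out the substitution $x\mapsto\phi(x)$ entry by entry, clearing the Möbius denominators only on the side permitted by centrality, and present the entries of $M$ in the mildly unsimplified form used in the statement (e.g. leaving $(cx-b)(ax-cp)^{-1}$ intact rather than attempting to combine it) precisely so that no illegitimate commutation is performed. The main obstacle is thus not conceptual but organizational: ensuring that the three substitutions, together with the $a,b,c$-rescaling, reproduce the displayed matrices \emph{with the factors in the stated order}, and confirming that $\phi\circ\psi$ (as opposed to $\psi\circ\phi$) is the composition consistent with the definition \eqref{F1} of $\mathcal{F}^{p,q}_{a,b,c}$.

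Finally, I would remark that the strongness hypothesis is used twice: once to pass, via Theorem~\ref{propsym}, from the refactorization problem to the map itself, and implicitly in item~(1), which we are quoting; no independent verification of the Yang--Baxter property is needed here since the maps $\Lambda_{a,b,c},\mathcal{H}_{a,b,c},\mathcal{F}_{a,b,c}$ were already shown to be Yang--Baxter maps in \cite{KasKoul} and are, by construction, of the form \eqref{Rs}.
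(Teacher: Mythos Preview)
Your approach is essentially the same as the paper's: item~(1) is quoted from \cite{KasKoul}, and items~(2)--(4) are obtained by applying Theorem~\ref{propsym} with $\sigma=\psi,\phi,\phi\circ\psi$ respectively, identifying $M(x,\alpha,\zeta)=L(\sigma_\alpha(x),\alpha,\zeta)$. One simplification: since the Lax matrix \eqref{laxk1}, the symmetries \eqref{symab}, and the definitions \eqref{L1}--\eqref{F1} are already written with general $a,b,c$, you can carry out the substitutions directly for arbitrary $a,b,c$ rather than specializing to $a=b=c=1$ and then rescaling.
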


\begin{proof}
The first item of the Proposition was proven in \cite{KasKoul}, where it was shown that $\mathcal{K}$ admits the strong Lax matrix (\ref{laxk1}).  Then, using the fact that the generic maps of the $\mathcal{K},$ $\Lambda,$ $\mathcal{H}$ and $\mathcal{F}$ lists are related via the symmetries (\ref{symab}), the proof of the remaining items of the Proposition are a direct consequence of Theorem \ref{propsym}.
\end{proof}

\begin{corollary}
By specifying the constants $a,b,c\in\{0,1,\infty\}$ appropriately  in (\ref{K1})-(\ref{F1}), we obtain  members of the associated  $\mathcal{K},$  $\Lambda$, $\mathcal{H}$ and $\mathcal{F}$ lists  of Yang-Baxter maps. For example,   the non-abelian extensions of $F_I, F_{II}$ and $F_{III}$ respectively are $\mathcal{F}_I:=\mathcal{F}^{p,q}_{1,1,1},$ $\mathcal{F}_{II}:=\mathcal{F}^{p,q}_{0,1,1}$ and $\mathcal{F}_{III}:=\mathcal{F}^{p,q}_{0,0,1}.$ Also, $\mathcal{F}_{IV}$ is obtained from $\mathcal{F}_{II}$ and $\mathcal{F}_{V}$ is obtained from $\mathcal{F}_{IV}$ by limiting procedures (see \cite{KasKoul}).  

Assuming that all variables that participate to the aforementioned maps commute, $\mathcal{K},$  $\Lambda$ $\mathcal{H},$ and $\mathcal{F}$ lists collapse to the $H$ and $F$ lists of YB maps (see  Figure \ref{nonco}). That allows us to obtain the rational Lax matrices associated with the $F$-list of quadrirational Yang-Baxter maps which are presented in Table \ref{table2}. Note that   we  were  not able to obtain rational Lax matrices associated with $F_{IV}$ since the latter does not admit any    symmetry. Mapping $F_{IV}$ admits the Lax matrix (\ref{laxf4}) that is not rational so we did not include it in Table.  \ref{table2}.
\end{corollary}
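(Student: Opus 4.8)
The Corollary bundles two kinds of assertion, which I would prove in turn. The first is that the designated specializations of $(a,b,c)$ in the generic maps \eqref{K1}--\eqref{F1} reproduce the named members of the $\mathcal{K},\Lambda,\mathcal{H},\mathcal{F}$ lists. The second is that, under commutativity, these four families collapse to the classical $H$ and $F$ lists and the refactorization problems of the preceding Proposition descend to \emph{rational} Lax matrices for the $F$-list (Table \ref{table2}), with $F_{IV}$ as the only exception. The first part is essentially bookkeeping; the substance lies in the rationality claim of the second.

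For the first assertion I would substitute directly. Putting $a=b=c=1$ into \eqref{F1} and reducing the non-commutative rational expressions returns the normal form of $\mathcal{F}_I$ of \cite{KasKoul}, and likewise $(a,b,c)=(0,1,1)$ and $(0,0,1)$ give $\mathcal{F}_{II}$ and $\mathcal{F}_{III}$; the same substitutions in \eqref{K1}--\eqref{H1} handle the other three lists. Whenever a parameter equals $\infty$ I would first rescale via the equivalence \eqref{equiv}, which lets any nonzero finite parameter be normalized to $1$, and then read the substitution projectively as a limit. The degenerate maps $\mathcal{F}_{IV}$ and $\mathcal{F}_V$ are not bare substitutions but the pole-type limits of $\mathcal{F}_{II}$ and $\mathcal{F}_{IV}$ recorded in \cite{KasKoul}, which I would simply cite.

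For the second assertion the guiding observation is that the equivalence ``generic map $\Leftrightarrow$ refactorization problem'' proved in the preceding Proposition is an algebraic identity over the division ring, hence is preserved by any specialization. Imposing commutativity is such a specialization: the generic maps \eqref{K1}--\eqref{F1} reduce to the abelian generic members of the $H$ and $F$ lists (Figure \ref{nonco}(b)), while the entries of $L$ in \eqref{laxk1} and of the $\mathcal{F}$-matrix $M$---built only from ring operations and inverses of elements such as $cx-b$ and $ax-cp$---become ordinary rational functions of $x,p,\zeta$, free of the square roots forced by the naive M\"obius construction \eqref{laxf4}. Thus the abelian $\mathcal{F}$-item reads: the abelian $F$-map is equivalent to a rational refactorization $M(u)L(v)=M(y)L(x)$. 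Specializing $(a,b,c)\in\{0,1,\infty\}^3$ and clearing common scalar (M\"obius-trivial) factors then yields the explicit rational Lax matrices for $F_I,F_{II},F_{III},F_V$ gathered in Table \ref{table2}.

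The main obstacle is precisely the regularity of these parameter limits: I must verify that letting $a,b,c\to\{0,\infty\}$ inside $L$ and $M$ neither reintroduces branch-type singularities nor pushes the matrices outside $GL_2$, so that after normalization the table entries are honest rational Lax matrices. This is also where $F_{IV}$ drops out. Having no nontrivial symmetry, $F_{IV}$ is not the image of any generic member under the morphisms $\Phi,\Psi$ of Figure \ref{nonco}, so the symmetry-twisting mechanism of Theorem \ref{propsym} never produces a rational companion $M$ for it; consistently it survives only as the degenerate limit of $\mathcal{F}_{II}$, a limit that restores the irrational matrix \eqref{laxf4}. Recording this exclusion completes the argument and explains the absence of $F_{IV}$ from Table \ref{table2}.
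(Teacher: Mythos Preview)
Your proposal is correct and follows the same approach as the paper, which in fact gives no separate proof: the Corollary is stated as an immediate consequence of the preceding Proposition and Theorem~\ref{propsym}, obtained by specializing the parameters $(a,b,c)$ and passing to the abelian limit. Your write-up is actually more detailed than the paper's treatment, and your identification of the regularity of the parameter limits as the only nontrivial point, together with the symmetry-based explanation for the exclusion of $F_{IV}$, matches the paper's reasoning exactly.
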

  
\begin{table}[htbp]
 \begin{tabular}{|c|c|c|c|c|}
  \hline
   & $F_I$&$F_{II}$&$F_{III}$&$F_V$ \\  \hline
  $L(x,\alpha;\zeta)$ & ${\ds\begin{pmatrix}
                          \frac{x-\alpha}{x-1}&\zeta (\alpha-1)\frac{x}{x-1}\\
                          \alpha-x& x
                          \end{pmatrix}}$ & ${\ds\begin{pmatrix}
                          1&\zeta \alpha (x-1) \\
                          \frac{1}{x}& \frac{x-1}{x}\end{pmatrix}}$& ${\ds\begin{pmatrix}
                          1&\zeta \alpha x \\
                          \frac{1}{x}& 1 \end{pmatrix}}$  &  ${\ds\begin{pmatrix}
                          x&\alpha-\zeta+x^2 \\
                          1 & x \end{pmatrix}}$    \\  \hline
  $M(x,\alpha;\zeta)$ & ${\ds\begin{pmatrix}
                          \alpha\frac{x-1}{x-\alpha}&-\zeta \alpha (\alpha-1)\frac{1}{x-\alpha}\\
                          \alpha \frac{x-1}{x}& \frac{\alpha}{x}
                          \end{pmatrix}}$ & ${\ds\begin{pmatrix}
                          1&-\zeta \alpha x \\
                          \frac{1}{1-x}& \frac{x}{x-1}\end{pmatrix}}$& ${\ds\begin{pmatrix}
                          1&-\zeta \alpha x \\
                          -\frac{1}{x}& 1 \end{pmatrix}}$  &  ${\ds\begin{pmatrix}
                          -x&\alpha-\zeta+x^2 \\
                          1 & -x \end{pmatrix}}$    \\

  \hline
\end{tabular}
\caption{Lax matrices of the $F-$list of quadrirational YB maps} \label{table2}
\end{table}


\section{Discussion}

In this article we have provided rational Lax matrices for the generic members of the $\mathcal{F},$  $\mathcal{H},$ $\mathcal{K}$ and  $\Lambda$ lists of non-abelian quadrirational Yang-Baxter maps. In the abelian setting,  we obtained rational Lax matrices for the abelian quadrirational YB maps of  the  $H$ and the $F$ lists.

As already mentioned, due to the lack of classification results up to the equivalence relation introduced in Proposition \ref{equivalence}, the $H$ and the $F$ lists do not exhaust all quadrirational abelian YB maps and the same holds true for the  $\mathcal{F},$  $\mathcal{H},$ $\mathcal{K}$ and  $\Lambda$ lists. A possible direction for future research is to complete the classification of quadrirational YB maps at least in the abelian setting.  For example there exist YB maps which are quadrirational but not equivalent with any member of the $F$ or the $H$ list. One of such maps together with its associated Lax matrix was firstly introduced in \cite{kp4}, as a four-parameter extension of the $H_{III}^A$ YB map. This map turned out  not to be an involution, which explains why it is excluded from the $H$ and $F$ lists (the involutivity property of a map is preserved under Yang–Baxter equivalence). 

In what follows, we present four  non-abelian  six-parameter extensions of $\mathcal{F}_{III},$ $\mathcal{K}_{III},$ $\mathcal{F}_{V}$ and $\mathcal{H}_{V}$ maps. We denote these maps as  ${}^e\mathcal{F}_{III},$ ${}^e\mathcal{K}_{III},$ ${}^e\mathcal{F}_{V}$ and ${}^e\mathcal{H}_{V}$  and since they turn out to be non involutive, they are not equivalent to their two-parameter counterparts.
Explicitly they read $R^{{\bf p},{\bf q}}:(x,y)\mapsto(u,v),$ where  ${\bf p}:=(p, p^{(1)},p^{(2)}),$ ${\bf q}:=(q, q^{(1)},q^{(2)})$ and
\begin{align*}
\begin{aligned}
u=&p^{-1}y\left(p^{(2)}x-q^{(1)}y\right)^{-1}\left(p q^{(2)} x-qp^{(1)}y\right),\\
v=&q^{-1}x\left(p^{(2)}x-q^{(1)}y\right)^{-1}\left(p q^{(2)} x-qp^{(1)}y\right),
\end{aligned}
&&  ({}^e\mathcal{F}_{III})
\end{align*}
\begin{align*}
\begin{aligned}
u=&q^{(1)}-p^{(1)}+y+(p-q+q^{(1)}q^{(2)}-p^{(1)}p^{(2)})\left(p^{(1)}-q^{(2)}+x-y\right)^{-1},\\
v=&p^{(2)}-q^{(2)}+x+(p-q+q^{(1)}q^{(2)}-p^{(1)}p^{(2)})\left(p^{(1)}-q^{(2)}+x-y\right)^{-1},
\end{aligned}
&& ({}^e\mathcal{F}_{V})
\end{align*}
and

\begin{align*}
\begin{aligned}
u=&p^{-1}y\left(p^{(2)}x+q^{(1)}y\right)^{-1}\left(p q^{(2)} x+qp^{(1)}y\right),\\
v=&q^{-1}x\left(p^{(2)}x+q^{(1)}y\right)^{-1}\left(p q^{(2)} x+qp^{(1)}y\right),
\end{aligned}
&&  ({}^e\mathcal{K}_{III})
\end{align*}
\begin{align*}
\begin{aligned}
u=&q^{(1)}+p^{(1)}+y-(p-q+q^{(1)}q^{(2)}-p^{(1)}p^{(2)})\left(p^{(1)}+q^{(2)}+x+y\right)^{-1},\\
v=&p^{(2)}+q^{(2)}+x+(p-q+q^{(1)}q^{(2)}-p^{(1)}p^{(2)})\left(p^{(1)}+q^{(2)}+x+y\right)^{-1},
 \end{aligned}& &  ({}^e\mathcal{H}_{V})
\end{align*}
where for ${}^e\mathcal{H}_{V}$ the YB parameters transform as
\begin{align*}(p,p^{(1)},p^{(2)};q,q^{(1)},q^{(2)})
\mapsto (p,-p^{(1)},-p^{(2)};q,-q^{(1)},-q^{(2)}).
\end{align*}
The non-abelian YB map ${}^e\mathcal{K}_{III}$ was firstly introduced in \cite{Kass:2021}. In its abelian limit and for $p=q=1,$ coincides with the two-parameter extension of $H_{III}^A$ given in \cite{kp4}. Note that in \cite{Kass:2023} the non-Abelian hierarchies of the $\mathcal{K}$ and the $\Lambda$ lists were obtained. 

We will now present explicitly the Lax matrices of the aforementioned non-abelian maps of this Section.  The YB map ${}^e\mathcal{K}_{III}$ is equivalent to the refactorization problem
   $ 
L(u,{\bf p},\zeta)L(v,{\bf q},\zeta)=L(y,{\bf q},\zeta)L(x,{\bf p},\zeta),
$
where
$
 L(x,{\bf p},\zeta):=\begin{pmatrix}
p^{(1)}&\zeta x\\
p x^{-1}&p^{(2)}   
\end{pmatrix},
$
while ${}^e\mathcal{F}_{III}$ is equivalent to the refactorization problem
    $ 
M(u,{\bf p},\zeta)L(v,{\bf q},\zeta)=M(y,{\bf q},\zeta)L(x,{\bf p},\zeta),
$
where
$
 L(x,{\bf p},\zeta):=\begin{pmatrix}
p^{(1)}&\zeta x\\
p x^{-1}&p^{(2)}   
\end{pmatrix},$ $M(x,{\bf p},\zeta):=\begin{pmatrix}
p^{(1)}&-\zeta x\\
-p x^{-1}&p^{(2)}   
\end{pmatrix}.
$
Finally, ${}^e\mathcal{H}_{V}$ is equivalent to the refactorization problem
    \begin{align*} 
L(u,p,-p^{(1)},-p^{(2)},\zeta)L(v,q,-q^{(1)},-q^{(2)},\zeta)=L(y,q,q^{(1)},q^{(2)},\zeta)L(x,p,p^{(1)},p^{(2)},\zeta),
\end{align*}
where
$
 L(x,{\bf p},\zeta):=\begin{pmatrix}
x+p^{(1)}&x^2+(p^{(1)}+p^{(2)})x+p-\zeta \\
1 &x+p^{(2)}   
\end{pmatrix},
$
while ${}^e\mathcal{F}_{V}$ is equivalent to the refactorization problem
 $ 
M(u,{\bf p},\zeta)L(v,{\bf q},\zeta)=M(y,{\bf q},\zeta)L(x,{\bf p},\zeta),
$
where
\begin{align*}
 L(x,{\bf p},\zeta):=\begin{pmatrix}
-x-p^{(1)}&x^2+(p^{(1)}+p^{(2)})x+p-\zeta \\
1 &-x-p^{(2)}   
\end{pmatrix},\\
M(x,{\bf p},\zeta):=\begin{pmatrix}
x+p^{(1)}&x^2+(p^{(1)}+p^{(2)})x+p-\zeta \\
1 &x+p^{(2)}   
\end{pmatrix}.
\end{align*}

We leave the study of the non-abelian multi-parametric extensions of YB maps to a future study.

\section*{Acknowledgements}
\parbox{.135\textwidth}{\begin{tikzpicture}[scale=.03]
\fill[fill={rgb,255:red,0;green,51;blue,153}] (-27,-18) rectangle (27,18);
\pgfmathsetmacro\inr{tan(36)/cos(18)}
\foreach \i in {0,1,...,11} {
\begin{scope}[shift={(30*\i:12)}]
\fill[fill={rgb,255:red,255;green,204;blue,0}] (90:2)
\foreach \x in {0,1,...,4} { -- (90+72*\x:2) -- (126+72*\x:\inr) };
\end{scope}}
\end{tikzpicture}} \parbox{.85\textwidth}
{This research is part of the project No. 2022/45/P/ST1/03998  co-funded by the National Science Centre and the European Union Framework Programme
 for Research and Innovation Horizon 2020 under the Marie Sklodowska-Curie grant agreement No. 945339. For the purpose of Open Access, the author has applied a CC-BY public copyright licence to any Author Accepted Manuscript (AAM) version arising from this submission.}

\appendix
\appendixpage
\section{The $F$ and the $H-$list of quadrirational Yang-Baxter maps} \label{app1}

The Yang-Baxter maps $R$ of the $F$ and the $H-$list, explicitly read:
$$
R: \mathbb{CP}^1\times \mathbb{CP}^1\ni (x,y)\mapsto (u,v)\in \mathbb{CP}^1\times \mathbb{CP}^1
$$
where:
\begin{align*}
&\begin{aligned}
u=&p y P,\\
v=&q x P,
\end{aligned}
&&P=\frac{(1-q)x+q-p+(p-1)y}{q(1-p)x+(p-q)xy+p (q-1) y},&&(F_I)\\
&\begin{aligned}
u=&\frac{y}{p} P,\\
v=&\frac{x}{q} P,
\end{aligned}
&&P=\frac{p x-q y+q-p}{x-y},&&(F_{II}),  \\
&\begin{aligned}
u=&\frac{y}{p} P,\\
v=&\frac{x}{q} P,
\end{aligned}
&&P=\frac{p x-q y}{x-y},&&(F_{III}),\\
&\begin{aligned}
u=&y P,\\
v=&x P,
\end{aligned}
&&P=1+\frac{q-p}{x-y},&&(F_{IV}), \\
&\begin{aligned}
u=&y+ P,\\
v=&x+ P,
\end{aligned}
&&P=\frac{p-q}{x-y},&&(F_{V}), 
\end{align*}

and 
\begin{align*}
&\begin{aligned}
u=&yQ,\\
v=&xQ^{-1},
\end{aligned} &
&Q=\frac{(p-1)xy+(q-p)x+p(1-q)}{(q-1)xy+(p-q)y+q(1-p)},& &(H_I)\\ 
&\begin{aligned}
u=&\frac{q}{p}y+\frac{1}{p}Q,\\
v=&\frac{p}{q}x-\frac{1}{q}Q,
\end{aligned} &
&Q=\frac{(p-q)xy}{x+y-1},& &(H_{II}) \\
&\begin{aligned}
u=&\frac{y}{p} Q,\\
v=&\frac{x}{q} Q,
\end{aligned} &
&Q=\frac{p x+q y}{x+y},& &(H_{III}^A)\\
&\begin{aligned}
u=&y Q,\\
v=&x Q^{-1},
\end{aligned} &
&Q=\frac{1+q x y}{1+p x y},& &(H_{III}^B)\\
&\begin{aligned}
u=&y-Q,\\
v=&x+Q,
\end{aligned} &
&Q=\frac{p-q}{x+y}.& &(H_{V})\\
\end{align*}

The symmetries of the $F$ and the $H-$list
are listed below.
\begin{align*}
    \phi(p):x\mapsto& \frac{x-p}{x-1},& \psi(p):x\mapsto& \frac{p}{x},& (F_I),(H_I)\\
    \phi(p):x\mapsto&1-x, & & & (F_{II}),(H_{II})\\
     \phi(p):x\mapsto& \frac{1}{px},& \psi(p):x\mapsto& -x,& (F_{III}),(H_{III}^A),(H_{III}^B)\\
     \phi(p):x\mapsto&-x, & & & (F_{V}),(H_{V})
\end{align*}
The Yang-Baxter map $F_{IV}$ does not admit a symmetry.

\end{document}